\documentclass[a4paper,11pt]{article}
\overfullrule3em

\usepackage{lmodern}
\usepackage[utf8]{inputenc}
\usepackage[T1]{fontenc}
\usepackage[USenglish]{babel}
\usepackage{amsmath,amssymb,amsthm}
\usepackage[shortlabels]{enumitem}

\usepackage{fullpage}

\usepackage[numbers,sort&compress]{natbib}

\usepackage[%
  pdftitle={Complexity of solving a system of difference constraints with variables restricted to a finite set},%
  pdfauthor={Santiago Cifuentes and Francisco J.\ Soulignac and Pablo Terlisky},%
  pdfcreator={},%
  pdfsubject={},%
  pdfkeywords={},%
  colorlinks=true,%
  linkcolor=blue,%
  citecolor=blue,%
  urlcolor=blue]{hyperref}  
\usepackage{doi}

\usepackage{algorithm}
\usepackage[indLines=false]{algpseudocodex}

\newtheorem{theorem}{Theorem}

\algrenewtext{While}[1]{\algorithmicwhile\ #1\textbf{:}}
\algrenewtext{For}[1]{\algorithmicfor\ #1\textbf{:}}
\algrenewtext{ForAll}[1]{\algorithmicforall\ #1\textbf{:}}
\algrenewtext{If}[1]{\algorithmicif\ #1\textbf{:}}
\algrenewcommand\algorithmicrequire{\textbf{Input:}}
\algrenewcommand\algorithmicensure{\textbf{Output:}}

\algnewcommand{\inlineFor}[1]{\State \algorithmicfor\ #1\textbf{:}}
\algnewcommand{\algorithmiclet}{\textbf{let}}
\algnewcommand{\Let}{\algorithmiclet\ }

\newcommand{\UnknownSet}{X}
\newcommand{\Unknown}{x}
\newcommand{\Bound}{b}
\newcommand{\ConstraintSet}{S}
\newcommand{\Constraint}{e}
\newcommand{\Solution}{s}
\newcommand{\Domain}{D}
\newcommand{\Element}{d}

\newcommand{\UnknownCount}{n}
\newcommand{\ConstraintCount}{m}
\newcommand{\DomainCount}{k}

\newcommand{\UnsatisfiedSet}{U}

\begin{document}
\title{Complexity of solving a system of difference constraints with variables restricted to a finite set\thanks{Santiago Cifuentes (\texttt{scifuentes@dc.uba.ar}),  Francisco J.\ Soulignac (\texttt{fsoulign@dc.uba.ar}), Pablo Terlisky (\texttt{terlisky@dc.uba.ar})}}

\author{Santiago Cifuentes \and Francisco J.\ Soulignac \and Pablo Terlisky}

\date{\normalsize Universidad de Buenos Aires. Facultad de Ciencias Exactas y Naturales. Departamento de Computación, Buenos Aires, Argentina\\
CONICET-Universidad de Buenos Aires. Instituto de Ciencias de la Computación (ICC), Buenos Aires, Argentina}

\maketitle

\begin{abstract}
Fishburn developed an algorithm to solve a system of $\ConstraintCount$ difference constraints whose $\UnknownCount$ unknowns must take values from a set with $\DomainCount$ real numbers [Solving a system of difference constraints with variables restricted to a finite set, Inform Process Lett 82 (3) (2002) 143--144]. We provide an implementation of Fishburn's algorithm that runs in $O(\UnknownCount+\DomainCount\ConstraintCount)$ time.
\end{abstract}

\section{Introduction}

A \emph{system of difference constraints} is a pair $(\UnknownSet, \ConstraintSet)$ where $\UnknownSet$ is a set of unknowns $\Unknown_1, \ldots, \Unknown_\UnknownCount$ and $\ConstraintSet$ is a family of $\ConstraintCount$ constraints of the form $\Constraint_{ij}\colon \Unknown_i - \Unknown_j \leq \Bound_{ij}$, for $\Bound_{ij} \in \mathbb{R}$.  Throughout this article, we refer to $\Solution \colon \UnknownSet \to \mathbb{R}$ as being a \emph{solution} to $(\UnknownSet, \ConstraintSet)$.  Moreover, we say that $\Solution$ is \emph{restricted} to $\Domain \subseteq \mathbb{R}$ when $\Solution(\Unknown) \in \Domain$ for every $\Unknown \in \UnknownSet$.   If $\Solution(\Unknown_i) - \Solution(\Unknown_j) \leq \Bound_{ij}$ for $\Constraint_{ij} \in \ConstraintSet$, then $\Solution$ \emph{satisfies} $\Constraint_{ij}$, while if $\Solution$ satisfies every constraint in $\ConstraintSet$, then $\Solution$ is \emph{feasible}.  The system $(\UnknownSet, \ConstraintSet)$ itself is \emph{feasible} (resp.\ restricted to $\Domain \subseteq \mathbb{R}$) when it has a feasible solution (resp.\ restricted to $\Domain$).

It is well known that the feasibility of $(\UnknownSet, \ConstraintSet)$ can be decided in $O(\UnknownCount\ConstraintCount)$ time with the Bellman-Ford algorithm~\cite{CormenLeisersonRivestStein2009}.  The output of Bellman-Ford is either a feasible solution or a subset of constraints that admits no solution.  In~\cite{FishburnIPL2002}, Fishburn proposed a simple algorithm to decide if $(\UnknownSet, \ConstraintSet)$ is feasible restricted to a set $\Domain = \{\Element_1 <  \ldots < \Element_\DomainCount\}$, that outputs a feasible solution $\Solution$ restricted to $\Domain$ in the affirmative case.  Fishburn's algorithm can be rephrased as in Algorithm~\ref{alg:fishburn}.
\begin{algorithm}
\caption{Fishburn's Algorithm}\label{alg:fishburn}
\begin{algorithmic}[1]
    \Require A system of difference constraints $(\UnknownSet, \ConstraintSet)$ and a set of real numbers  $\Domain = \{\Element_1 < \ldots < \Element_\DomainCount\}$
    \Ensure a feasible solution $\Solution$ to $(\UnknownSet, \ConstraintSet)$ restricted to $D$ or $\bot$ if such a solution does not exist
    \State \Let $s$ be a function in $\UnknownSet \to \mathbb{R}$
    \inlineFor{$\Unknown \in \UnknownSet$} $\Solution(\Unknown) := \Element_\DomainCount$\label{fishburn:init}
    \While{($\exists\Constraint_{ij} \in \ConstraintSet$)($\Solution$ does not satisfy $\Constraint_{ij})$}\label{fishburn:search}
        \If{$\Element_1 - \Solution(\Unknown_j) > \Bound_{ij}$}
            \State \Return $\bot$\label{fishburn:fail}
        \EndIf
        \State $\Solution(\Unknown_i) := \max\{\Element \in \Domain \mid \Element - \Solution(\Unknown_j)\leq \Bound_{ij}\}$.\label{fishburn:update}
    \EndWhile
    \State \Return $\Solution$\label{fishburn:success}
\end{algorithmic}
\end{algorithm}

Fishburn observed that loop \ref{fishburn:search}--\ref{fishburn:update} is executed $O(\DomainCount\UnknownCount)$ times, because $\Solution(\Unknown_i)$ is decreased at Step~\ref{fishburn:update}.  Consequently, as a single traversal of $\ConstraintSet$ is enough to find an unsatisfied constraint $\Constraint_{ij}$ at Step~\ref{fishburn:search}, Algorithm~\ref{alg:fishburn} runs in $O(\DomainCount\ConstraintCount\UnknownCount)$ time.

Fishburn did not provide a faster implementation of his algorithm in~\cite{FishburnIPL2002}.  Yet, it is easy to see that the update of $\Solution(\Unknown_i)$ at Step~\ref{fishburn:update} only affects the satisfiability of those constraints $\Constraint_{\ell i} \in \ConstraintSet$, $1 \leq \ell \leq \UnknownCount$.  Hence, Step~\ref{fishburn:search} can be restricted to a small subset of $\ConstraintSet$, improving the efficiency of the algorithm.  In this note we take advantage of this fact to show an implementation of Fishburn's algorithm that runs in $O(\UnknownCount+\DomainCount\ConstraintCount)$ time.

Since its appearance in late 2001, Fishburn's algorithm was applied mainly for the optimization of clock skew in digital circuits~\cite{FishburnITC1990}.  In this domain, the canonical application of Fishburn's algorithm is the clock shift decision problem, introduced in early 2002 by Singh and Brown~\cite{SinghBrown2002}: for a fixed clock period $t$ and a finite set $D$ of clock shifts (i.e., delays), determine is there exists an assignment of clock shifts to the registers of a digital circuit that satisfies the double-clocking and zero-clocking constraints~\cite{FishburnITC1990}.  To solve this problem, Singh and Brown apply a ``discrete version of the Bellman-Ford algorithm'' \cite[Algorithm CSDPcore]{SinghBrown2002}, that is nothing else than a restatement of Fishburn's algorithm.  Even though Singh and Brown explicitly state that Fishburn's algorithm runs in $O(\DomainCount\ConstraintCount\UnknownCount)$ time, they also state that they ``implement the algorithm in such a way that searching for an unsatisfied constraint takes at most $O(|R|)$ time'', where $R$ is the set of unknowns~\cite[p.~124]{SinghBrown2002}.  Interestingly, later works that apply Fishburn's algorithm to the clock shift decision problem mention that its time complexity is $O(\DomainCount\ConstraintCount\UnknownCount)$~\cite[e.g.][]{KohiraTakahashiITFECCS2014,LiLuZhouIToVLSIVS2014,Mashiko2017}.  In particular, an $O(\UnknownCount+\ConstraintCount)$ time algorithm that works only for circuits with $\DomainCount = 2$ registers was developed in~\cite{KohiraTakahashiITFECCS2014}, which is based on a reduction of the problem to $2$-SAT.  Our implementation of Fishburn's algorithm generalizes this result, without requiring an implementation of $2$-SAT, as it runs in $O(\UnknownCount+\ConstraintCount)$ time for every constant $\DomainCount$.

\section{An efficient implementation}

Algorithm~\ref{alg:improved} is the improved version of Algorithm~\ref{alg:fishburn} that we propose.  The main difference between both implementations is that the Algorithm~\ref{alg:improved} keeps a set of unknowns $\UnsatisfiedSet$.  Throughout the lifetime of the algorithm, an unknown $\Unknown_i$ belongs to $\UnsatisfiedSet$ if and only if some constraint $\Constraint_{ij} \in \ConstraintSet$ is not satisfied by the current solution $\Solution$.  Hence, $\UnsatisfiedSet$ gives immediate access to those constraints that are not satisfied by $\Solution$.

\begin{algorithm}
\caption{Improved Fishburn's Algorithm}\label{alg:improved}
\begin{algorithmic}[1]
    \Require A system of difference constraints $(\UnknownSet, \ConstraintSet)$ and a set of real numbers  $\Domain = \{\Element_1 < \ldots < \Element_\DomainCount\}$
    \Ensure a feasible solution $\Solution$ to $(\UnknownSet, \ConstraintSet)$ restricted to $D$ or $\bot$ if such a solution does not exist

    \State \Let $s$ be a function in $\UnknownSet \to \mathbb{R}$
    \inlineFor{$\Unknown \in \UnknownSet$} $\Solution(\Unknown) := \Element_\DomainCount$\label{improved:init}
    \For{$i := 1, \ldots, \UnknownCount$}
        \State \Let $\ConstraintSet_i^+ := \{\Constraint_{ij} \in \ConstraintSet \mid 1 \leq j \leq \UnknownCount\}$\label{improved:neighborhood+}
        \State \Let $\ConstraintSet_i^- := \{\Constraint_{ji} \in \ConstraintSet \mid 1 \leq j \leq \UnknownCount\}$.\label{improved:neighborhood-}
    \EndFor
    \State \Let $\UnsatisfiedSet := \{\Unknown_i \in \UnknownSet \mid (\exists \Constraint \in \ConstraintSet_i^+) (\Solution \text{ does satisfy } \Constraint)\}$.\label{improved:queue init}
    \While{$\UnsatisfiedSet \neq \emptyset$}\label{improved:loop init}
        \State Remove some unknown $\Unknown_i$ from $\UnsatisfiedSet$.\label{improved:queue remove}
        \If{$(\exists \Constraint_{ij} \in \ConstraintSet_i^+)(\Element_1 - \Solution(\Unknown_j) > \Bound_{ij})$}\label{improved:fail condition}
            \State \Return $\bot$\label{improved:fail}
        \EndIf
        \State $\Solution(\Unknown_i) := \max\{\Element \in \Domain \mid (\forall \Constraint_{ij} \in \ConstraintSet_i^+)(\Element - \Solution(\Unknown_j) \leq \Bound_{ij})\}$.\label{improved:update}
        \ForAll{$\Constraint_{ji} \in \ConstraintSet_i^{-}$ not satisfied by $\Solution$}
            \State Insert $\Unknown_j$ into $\UnsatisfiedSet$\label{improved:queue update}
        \EndFor
    \EndWhile
    \State \Return $\Solution$\label{improved:success}
\end{algorithmic}
\end{algorithm}

\begin{theorem}
 Let $(\UnknownSet, \ConstraintSet)$ be a system with $\ConstraintCount$ difference constraints and $\UnknownCount$ unknowns, and $\Domain$ be a set with $\DomainCount$ real numbers $\Element_1 < \ldots < \Element_\DomainCount$.  The following statements are true when Algorithm~\ref{alg:improved} is executed with input $(\UnknownSet, \ConstraintSet)$ and $\Domain \subseteq \mathbb{R}$:
 \begin{enumerate}[a), ref=\alph*, itemsep=.05\baselineskip,topsep=.3\baselineskip]
  \item Algorithm~\ref{alg:improved} stops in $O(\UnknownCount+\DomainCount\ConstraintCount)$ time.\label{thm:complexity}
  \item If Algorithm~\ref{alg:improved} returns a solution $\Solution \neq \bot$, then $\Solution$ is a feasible solution to $(\UnknownSet, \ConstraintSet)$ restricted to $\Domain$.\label{thm:feasible solution}
  \item If $(\UnknownSet, \ConstraintSet)$ is feasible restricted to $\Domain$, then Algorithm~\ref{alg:improved} returns a solution $\Solution \neq \bot$.\label{thm:feasible system}
 \end{enumerate}
\end{theorem}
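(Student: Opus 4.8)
The plan is to verify the three claims through a pair of loop invariants that pin down the behavior of $\Solution$ as the algorithm runs. The first is a \emph{monotonicity} statement: the value $\Solution(\Unknown_i)$ of every unknown is non-increasing over the lifetime of the algorithm, starting at the maximum element $\Element_\DomainCount$. The second is a \emph{covering} statement for $\UnsatisfiedSet$: at the top of the while loop, every constraint $\Constraint_{ij} \in \ConstraintSet$ not satisfied by the current $\Solution$ has its first unknown $\Unknown_i$ in $\UnsatisfiedSet$. Both are proved by induction on the iterations, checking the initialization and the effect of a single update of $\Solution(\Unknown_i)$.

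For~\ref{thm:complexity} I would first argue monotonicity: the only place $\Solution$ changes is the update step, which sets $\Solution(\Unknown_i)$ to the largest $\Element \in \Domain$ with $\Element \le \mu_i := \min\{\Solution(\Unknown_j)+\Bound_{ij} \mid \Constraint_{ij}\in\ConstraintSet_i^+\}$; since the $\Solution(\Unknown_j)$ only decrease, so does $\mu_i$, hence so does the chosen value. Next I would show that whenever $\Unknown_i$ is removed from $\UnsatisfiedSet$ and the failure test is not triggered, the update \emph{strictly} decreases $\Solution(\Unknown_i)$: membership in $\UnsatisfiedSet$ forces some $\Constraint_{ij}\in\ConstraintSet_i^+$ to be violated, i.e.\ $\Solution(\Unknown_i) > \Solution(\Unknown_j)+\Bound_{ij}\ge\mu_i$, so the new value is at most $\mu_i < \Solution(\Unknown_i)$. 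Because each $\Solution(\Unknown_i)$ ranges over the $\DomainCount$-element set $\Domain$ and strictly decreases at each processing, each unknown is processed $O(\DomainCount)$ times. Charging the failure test, the update, and the scan of $\ConstraintSet_i^-$ to $|\ConstraintSet_i^+| + |\ConstraintSet_i^-|$, and using $\sum_i |\ConstraintSet_i^+| = \sum_i |\ConstraintSet_i^-| = \ConstraintCount$, the loop costs $O(\DomainCount\ConstraintCount)$; computing each maximum by scanning $\Domain$ downward from the current value of $\Solution(\Unknown_i)$ (which only moves down) adds another $O(\DomainCount\ConstraintCount)$ in total, while the setup of $\Solution$, of the $\ConstraintSet_i^+$ and $\ConstraintSet_i^-$, and of the initial $\UnsatisfiedSet$ costs $O(\UnknownCount+\ConstraintCount)$, giving $O(\UnknownCount+\DomainCount\ConstraintCount)$.

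For~\ref{thm:feasible solution}, observe that $\Solution(\Unknown)\in\Domain$ is preserved by every update, so any returned $\Solution$ is restricted to $\Domain$. Feasibility follows from the covering invariant together with the fact that, right after processing $\Unknown_i$, every constraint of $\ConstraintSet_i^+$ is satisfied (by the choice of the maximum), while every newly violated constraint of $\ConstraintSet_i^-$, say $\Constraint_{ji}$, triggers the insertion of $\Unknown_j$ — and $\Constraint_{ji}\in\ConstraintSet_j^+$, so the covering invariant is restored. When the algorithm reaches the success line, $\UnsatisfiedSet=\emptyset$, so by covering no constraint is violated and $\Solution$ is feasible.

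The crux is~\ref{thm:feasible system}, for which I would prove a \emph{dominance} invariant: if $\Solution^\ast$ is any feasible solution restricted to $\Domain$, then $\Solution(\Unknown_i)\ge\Solution^\ast(\Unknown_i)$ holds for all $i$ throughout the run. Initialization is immediate since $\Solution(\Unknown_i)=\Element_\DomainCount\ge\Solution^\ast(\Unknown_i)$. For the update, feasibility of $\Solution^\ast$ gives $\Solution^\ast(\Unknown_i)\le\Solution^\ast(\Unknown_j)+\Bound_{ij}\le\Solution(\Unknown_j)+\Bound_{ij}$ for every $\Constraint_{ij}\in\ConstraintSet_i^+$ (using the inductive hypothesis on $\Unknown_j$), so $\Solution^\ast(\Unknown_i)\le\mu_i$; as $\Solution^\ast(\Unknown_i)\in\Domain$, it is at most the new value of $\Solution(\Unknown_i)$, preserving dominance. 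Finally I would show that the failure line is never reached when such an $\Solution^\ast$ exists: failure means $\Element_1 > \Solution(\Unknown_j)+\Bound_{ij}$ for some $\Constraint_{ij}\in\ConstraintSet_i^+$, whence $\Solution^\ast(\Unknown_i)\le\Solution^\ast(\Unknown_j)+\Bound_{ij}\le\Solution(\Unknown_j)+\Bound_{ij} < \Element_1$, contradicting $\Solution^\ast(\Unknown_i)\in\Domain$. Since~\ref{thm:complexity} guarantees termination, not failing means the algorithm returns some $\Solution\ne\bot$. The main obstacle is setting up the dominance invariant so that it survives the discrete maximum in the update step; everything else is bookkeeping around the two simpler invariants.
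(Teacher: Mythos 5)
Your overall strategy coincides with the paper's: the same three invariants (values restricted to $\Domain$ and non-increasing, dominance over any feasible solution restricted to $\Domain$, and a characterization of $\UnsatisfiedSet$), the same strict-decrease counting for part~(\ref{thm:complexity}), and the same dominance-based argument for part~(\ref{thm:feasible system}) (the paper phrases it contrapositively with a two-case analysis, but the content is identical to your direct contradiction). There is, however, one genuine gap: you state the covering invariant only in the direction ``every violated constraint $\Constraint_{ij}$ has $\Unknown_i \in \UnsatisfiedSet$,'' yet the key step of your complexity argument --- that each removal of $\Unknown_i$ from $\UnsatisfiedSet$ either triggers failure or \emph{strictly} decreases $\Solution(\Unknown_i)$ --- invokes the converse: ``membership in $\UnsatisfiedSet$ forces some $\Constraint_{ij} \in \ConstraintSet_i^+$ to be violated.'' Without that converse, a removal could leave $\Solution(\Unknown_i)$ unchanged, and the $O(\DomainCount)$ bound on the number of removals per unknown would be unsupported. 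The paper avoids this by proving the characterization of $\UnsatisfiedSet$ as a biconditional (its invariant (iii)).

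The converse is true and not hard to add: at initialization $\UnsatisfiedSet$ is defined exactly as the set of unknowns with a violated constraint in $\ConstraintSet_i^+$, and every later insertion of $\Unknown_j$ at Step~\ref{improved:queue update} is witnessed by a violated $\Constraint_{ji} \in \ConstraintSet_j^+$; moreover, such a witness persists until $\Unknown_j$ is next removed, because $\Solution(\Unknown_j)$ is not modified while $\Unknown_j$ sits in $\UnsatisfiedSet$ and the other endpoint $\Solution(\Unknown_i)$ can only decrease, which only worsens the violation. With the covering invariant upgraded to this biconditional, the rest of your argument (including the amortized accounting of the downward scan of $\Domain$ and the handshaking bound $\sum_i\left(|\ConstraintSet_i^+|+|\ConstraintSet_i^-|\right) = 2\ConstraintCount$) goes through and matches the paper's proof; your observation that part~(\ref{thm:feasible system}) also needs the termination guarantee from part~(\ref{thm:complexity}) is a point the paper leaves implicit.
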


\begin{proof}
 Before dealing with \ref{thm:complexity}--\ref{thm:feasible system}, we prove that the following statements are true immediately before each execution of Step~\ref{improved:loop init}:
 \begin{enumerate}[(i), itemsep=.05\baselineskip, topsep=.3\baselineskip]
  \item $\Solution$ is a solution to $(\UnknownSet, \ConstraintSet)$ restricted to $\Domain$,\label{inv:solution}
  \item if $\Solution'$ is a feasible solution to $(\UnknownSet, \ConstraintSet)$ restricted to $\Domain$, then $\Solution'(\Unknown) \leq \Solution(\Unknown)$ for every $\Unknown \in \UnknownSet$, and\label{inv:solution greatest}
  \item $\Unknown_j \in \UnsatisfiedSet$ ($1 \leq j \leq \UnknownCount$) if and only if $\Solution$ does not satisfy some constraint in $\ConstraintSet_j^+$.\label{inv:unsatisfied}
 \end{enumerate}
 
  Certainly, \ref{inv:solution}--\ref{inv:unsatisfied} hold immediately before the first execution of Step~\ref{improved:loop init}, because of Steps~\ref{improved:init}--\ref{improved:queue init}.  Statement~\ref{inv:solution} remains true because $\Solution$ is updated only by Step~\ref{improved:update}.  Regarding~\ref{inv:solution greatest}, let $\Solution'$ be a feasible solution to $(\UnknownSet,\ConstraintSet)$ restricted to $\Domain$.  By hypothesis, \ref{inv:solution greatest} is true before Step~\ref{improved:loop init}, thus $\Solution'(\Unknown_j) \leq \Solution(\Unknown_j)$.  Then, taking into account that $\ConstraintSet_i^+$ is precisely the subset of constraints in $\ConstraintSet$ that have $\Unknown_i$ with coefficient~$1$ (Step~\ref{improved:neighborhood+}), it follows that $\Solution'(\Unknown_i) \leq \Bound_{ij} + \Solution'(\Unknown_j) \leq \Bound_{ij} + \Solution(\Unknown_j)$ for every $\Constraint_{ij} \in \ConstraintSet^+_i$. Consequently, $\Solution'(\Unknown_i) \leq \Solution(\Unknown_i)$ after Step~\ref{improved:update} and, therefore, \ref{inv:solution greatest} is true immediately before the next execution of Step~\ref{improved:loop init}.  Finally, regarding~\ref{inv:unsatisfied}, observe that $\Solution$ satisfies every constraint in $\ConstraintSet_i^+$ after Step~\ref{improved:update}, thus \ref{inv:unsatisfied} holds for $j = i$ by Step~\ref{improved:queue remove}.  Moreover, $\Constraint_{j\ell} \in \ConstraintSet_{j}^+$ ($1 \leq \ell \leq \UnknownCount$) is not satisfied by $\Solution$ after Step~\ref{improved:update} if and only if either $\ell \neq i$ and $\Constraint_{j\ell}$ was neither satisfied by $\Solution$ before Step~\ref{improved:update} or $\ell = i$ and $\Constraint_{ji}$ is not satisfied by $\Solution$ after its update at Step~\ref{improved:update}.  By \ref{inv:unsatisfied}, the former happens if and only if $\Unknown_j \in \UnsatisfiedSet$ before Step~\ref{improved:queue update}, while the latter happens if and only if $\Constraint_{j\ell} \in \ConstraintSet_i^-$ (Step~\ref{improved:neighborhood-}) is not satisfied by $\Solution$, in which case $\Unknown_{j}$ is inserted into $\UnsatisfiedSet$ at Step~\ref{improved:queue update}.  Altogether, \ref{inv:unsatisfied} is also true before each execution of Step~\ref{improved:loop init}.\nobreak\hfill\nobreak$\triangle$

 In what follows, we use \ref{inv:solution}--\ref{inv:unsatisfied} to prove \ref{thm:complexity}--\ref{thm:feasible system}.

 \ref{thm:complexity}) Suppose $\Unknown_i$ ($1 \leq i \leq \UnknownCount$) is removed from $\UnsatisfiedSet$ at some execution of Step~\ref{improved:queue remove}.  By~\ref{inv:unsatisfied}, some constraint $\Constraint_{ij} \in \ConstraintSet_i^+$ is not satisfied by $\Solution$, thus $\Solution(\Unknown_i) - \Solution(\Unknown_j) > \Bound_{ij}$.  Consequently, either $\Element_1 - \Solution(\Unknown_j) > \Bound_{ij}$ and Algorithm~\ref{alg:improved} halts at Step~\ref{improved:fail} or $\Solution(\Unknown_i)$ is updated to $\Element$ at Step~\ref{improved:update} for some $\Element_1 \leq \Element < \Solution(\Unknown_i)$.  Whichever the case, $\Unknown_i$ is removed at most $O(\DomainCount)$ times from $\UnsatisfiedSet$, thus Algorithm~\ref{alg:improved} runs for a finite amount of time.

 Regarding the time consumed by Algorithm~\ref{alg:improved}, observe that each iteration of the Loop \ref{improved:loop init}--\ref{improved:queue update} requires $O(|\ConstraintSet_i^+|+|\ConstraintSet_i^-|+ p - q)$ time with a standard implementation, where
 \begin{enumerate}
  \item[$\bullet$] $\Element_p$ is the value of $\Solution(\Unknown_i)$ before the execution of \ref{improved:update},
  \item[$\bullet$] $q = 0$ if the loop reaches Step~\ref{improved:fail} and the algorithm halts, and
  \item[$\bullet$] $q > 0$ and $\Element_q$ is the value assigned to $\Solution(\Unknown_i)$ if Step~\ref{improved:update} is reached by the loop.
 \end{enumerate}
  Then, as Steps~\ref{improved:init}--\ref{improved:queue init} require $O(\UnknownCount + \ConstraintCount)$ time and each $\Unknown \in \UnknownSet$ is removed at most $O(\DomainCount)$ times from $\UnsatisfiedSet$, the Handshaking Lemma implies that the total time consumed by Algorithm~\ref{alg:improved} is:
 \begin{displaymath}
  O\left(\UnknownCount + \ConstraintCount + \sum_{i=1}^{\UnknownCount} \DomainCount\left(|\ConstraintSet_i^+| + |\ConstraintSet_i^-|\right)\right) = O(\UnknownCount+\DomainCount\ConstraintCount).
 \end{displaymath}

 \ref{thm:feasible solution}) Algorithm~\ref{alg:improved} returns a solution $\Solution \neq \bot$ only if Step~\ref{improved:success} is executed, a situation that only occurs if $\UnsatisfiedSet = \emptyset$ when Step~\ref{improved:loop init} is executed for the last time. By~\ref{inv:solution}, $\Solution$ is a solution to $(\UnknownSet, \ConstraintSet)$ restricted to $\Domain$, and, by~\ref{inv:unsatisfied}, $\Solution$ is feasible because $\ConstraintSet = \ConstraintSet_1^+ \cup \ldots \ConstraintSet_\UnknownCount^+$.

 \ref{thm:feasible system}) If Algorithm~\ref{alg:improved} returns $\bot$, then Step~\ref{improved:fail} is reached.  By Step~\ref{improved:fail condition}, $\Element_1 - \Solution(\Unknown_j) > \Bound_{ij}$ for some $\Constraint_{ij} \in \ConstraintSet_{i}^+$, where $\Unknown_i$ is the unknown removed from $\UnsatisfiedSet$ at Step~\ref{improved:queue remove}.  If $\Solution'$ is a solution to $(\UnknownSet, \ConstraintSet)$ restricted to $\Domain$, then either $\Solution'(\Unknown_j) > \Solution(\Unknown_j)$ and $\Solution'$ is not feasible by~\ref{inv:solution greatest} or $\Element_1 \leq \Solution'(\Unknown_j) \leq \Solution(\Unknown_j)$ and $\Solution'$ does not satisfy $\Constraint_{ij}$ because $\Element_1 - \Solution'(\Unknown_j) \geq \Element_1 - \Solution(\Unknown_j) > \Bound_{ij}$.  Whichever the case, $\Solution'$ is not feasible.
\end{proof}

\bibliographystyle{abbrvnat}
\bibliography{biblio}

\end{document}